\newtheorem{theorem}{Theorem}
\newtheorem{lemma}{Lemma}
\newtheorem{remark}{Remark}
\title{\LARGE \bf
Two-stage solution for ancilla-assisted quantum process tomography: error analysis and optimal design
}
\author{Shuixin Xiao$^{1,2,3}$, Yuanlong Wang$^{4}$, Daoyi Dong$^{2,3}$ and Jun Zhang$^{1}$
\thanks{*This research was supported by the National Natural Science Foundation
	of China (62173229, 12288201) and the Australian Research Council's Future Fellowship funding scheme under
	Project FT220100656.}
\thanks{$^{1}$University of Michigan – Shanghai Jiao Tong University Joint Institute, Shanghai Jiao Tong
	University, Shanghai 200240, China.
        {\tt\small xiaoshuixin@sjtu.edu.cn, zhangjun12@sjtu.edu.cn}}%
\thanks{$^{2}$ School of Engineering, Australian National University, ACT 2601, Australia.
        	{\tt\small  daoyi.dong@anu.edu.au}}%
\thanks{$^{3}$School of Engineering and Technology, University of New South Wales, Canberra ACT 2600, Australia.
}%
\thanks{$^{4}$ Key Laboratory of Systems and Control, Academy of Mathematics and Systems Science, Chinese Academy of Sciences, Beijing 100190, China.
	{\tt\small wangyuanlong@amss.ac.cn}}%
}
\begin{document}

\maketitle
\thispagestyle{empty}
\pagestyle{empty}

\begin{abstract}
Quantum process tomography (QPT) is a fundamental task to  characterize the dynamics of quantum systems. In contrast to standard QPT, ancilla-assisted process tomography (AAPT) framework introduces an extra ancilla system such that a single input state is needed.
 In this paper, we extend the  two-stage solution, a method originally designed for standard QPT, to perform 
 AAPT. Our algorithm has  $O(Md_A^2d_B^2)$ computational complexity   where $ M $ is the type number of the measurement operators, $ d_A $ is the dimension of the quantum system of interest, and $d_B$ is the dimension of the ancilla system. Then we establish 
an error upper bound and further discuss the optimal design on the input state in AAPT. A numerical example on a phase  damping process demonstrates the
effectiveness of the optimal design and illustrates the theoretical error analysis.
\end{abstract}

\section{Introduction}
Significant advancements have been made in quantum information technologies, such as quantum computing \cite{qci}, quantum communication \cite{ren2017ground}, quantum sensing \cite{RevModPhys.89.035002,PhysRevA.103.042418,PhysRevApplied.17.014034}, and quantum control \cite{dong2010quantum,dong2022quantum,shu1,Dong2023} over the past few decades. In the practical implementation of these technologies, there exists a fundamental problem of characterizing unknown quantum dynamics, known as quantum process tomography (QPT). QPT is also crucial for other tomography tasks, including quantum state tomography (QST) \cite{Qi2013,Hou2016,MU2020108837,Qi2017,wiseman2009quantum} and quantum measurement device calibration \cite{xiaocdc2021,wang2019twostage,xiao2021optimal,Xiao2023}.

Many references have studied  Hamiltonian tomography \cite{PhysRevLett.108.080502,PhysRevA.95.022335,9026783,zhang1,8022944,unitary}, a reduced version of QPT in  a closed quantum system.
For an open quantum system, Refs. \cite{cptp2018} and \cite{surawy2021projected} proposed iterative projection algorithms to identify the  trace-preserving (TP) process. For the master equation form of evaluation, the algorithm to  identify the parameters with the time traces of an observable   was proposed in \cite{zhang2}. For the passive quantum system, Ref. \cite{7130587} discussed its identifiability and identification. 
 When the quantum process is non-trace-preserving (non-TP), a convex optimization method \cite{HUANG2020286} was proposed and 
  maximum likelihood estimation was utilized  \cite{PhysRevA.82.042307}  for non-TP process in experiment.
  For  non-Markovian dynamics, Refs. \cite{white2020demonstration} and \cite{PRXQuantum.3.020344} proposed  process tensor tomography.

According to the system architecture, there are 
generally three classes of QPT: Standard Quantum Process Tomography (SQPT) \cite{qci,unitary,PhysRevA.63.020101,xiaocdc2}, Ancilla-Assisted Process Tomography (AAPT) \cite{meaqo,aaqpt} and  Direct Characterization of Quantum Dynamics (DCQD) \cite{dcqd,dcgt,edcqy}. In SQPT, different states are inputted to the process and the output states are measured to collect information for QPT. In AAPT, an ancilla system is introduced besides the original system, and
 AAPT only needs to prepare a single quantum state with a full Schmidt number on the extended Hilbert space, while   SQPT needs at least $ d^2 $ different input states \cite{qci}. A widely used  AAPT input state is the maximally entangled state which has been generated in many experiments \cite{m2,m3,m4}.
DCQD relies on quantum error-detection techniques and besides  maximally entangled state, it also needs to generate different non-maximally entangled states \cite{dcgt}.

The main focus of this paper is on AAPT, and we adopt the two-stage solution (TSS) proposed in \cite{xiaocdc2} to address it. This method is an analytical algorithm, which enables us to analyze its computational complexity, error upper bound, and optimal design in greater detail.
Moreover, it can be applied for both TP and non-TP processes. In AAPT,
applying QST of the output state in the extended Hilbert space and TSS algorithm, we obtain an analytical solution of the reconstructed quantum process. Our algorithm has a computational complexity of $O(Md_A^2d_B^2)$ where $ M $ is the type number of the measurement operators, and $d_A$, $d_B$ are the dimensions of the quantum process of interest and the ancilla system, respectively. Then we establish an analytical error upper bound for our algorithm. Based on this error bound and the numerical stability, we further consider the optimization of the input state
 in AAPT.
We prove that the maximally entangled state is the optimal input state.
 A numerical example on a phase  damping process demonstrates the
effectiveness of the optimal design and illustrates the theoretical error analysis.

The organization of this paper is as follows. Section
II presents the two-stage solution for AAPT. Section III discusses the computational complexity, error analysis and the optimal design of the input state. A numerical example is presented in Section VI and Section V concludes this paper.

\section{Two-stage solution for ancilla-assisted quantum process tomography}
The quantum dynamics of a $ d $-dimensional system can be represented using a completely-positive (CP) linear map $ \mathcal{E} $.
We can construct a process matrix $ X \in \mathbb{C}^{d^2\times d^2} $ ($\mathbb{C}^{d^2\times d^2}$ is the set of all $d^2\times d^2$ complex matrices) from $ \mathcal{E} $ following the procedures in \cite{qci,8022944}, which is a Hermitian, positive semidefinite matrix. There is a one-to-one relationship between $ X $ and $ \mathcal{E} $, and the goal of QPT is to determine the process matrix $ X $.  We define the partial trace of $ X\in\mathbb{H}_A\otimes\mathbb{H}_B $  on Hilbert space $\mathbb{H}_A$ as $\text{Tr}_A(X)$.
When no information about the process output is lost, it holds $\operatorname{Tr}\left(\mathcal{E}\left(\rho^{\text{in}}\right)\right)=1$ where $ \rho^{\text{in}} $ is the input state and we call it a trace-preserving (TP) process. Otherwise, $\operatorname{Tr}\left(\mathcal{E}\left(\rho^{\text{in}}\right)\right)<1$ and the process is called  non-trace-preserving (non-TP) \cite{qci,8022944}.

A typical framework for QPT is ancilla-assisted quantum process tomography (AAPT).
In AAPT, an auxiliary system (ancilla) experiencing the identity channel is attached to the principal system, and the input state and the measurements on the output state are both on the extended Hilbert space as shown in Fig. \ref{AAPT}.
In this section, we firstly review AAPT's procedures  in \cite{aaqpt} and then we apply the two-stage solution (TSS) in \cite{xiaocdc2} to solve the AAPT problem for both TP and non-TP processes.
\begin{figure}[H]
	\centering
	\includegraphics[width=3.5in]{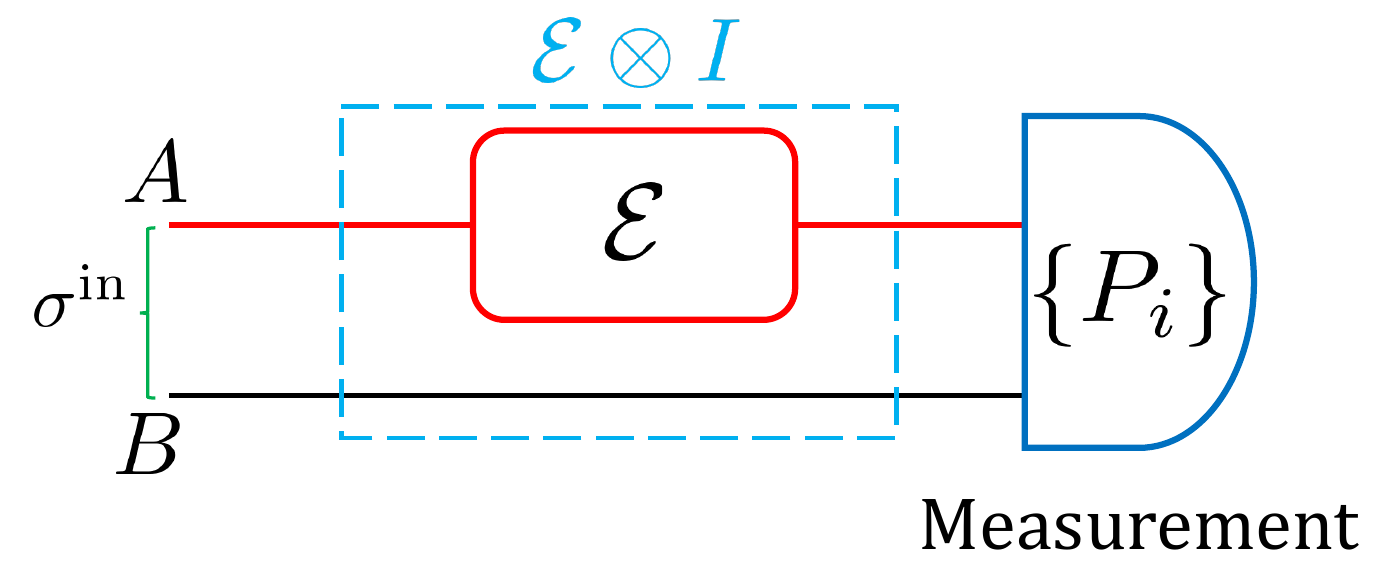}
	\centering{\caption{Schematic diagram of AAPT. The measurement operators $ \{P_i\} $ can be separable or entangled operators.}\label{AAPT}}
\end{figure}
For the quantum system $ A$ and the process $\mathcal{E}$, AAPT utilizes an ancilla system $B$ as Fig. \ref{AAPT} whose Hilbert space dimension is not smaller than that of $A$, i.e., $ d_B \geq d_A $ \cite{aaqpt}.
For the input state $ \sigma^{\text{in}} $, its operator--Schmidt decomposition \cite{PhysRevA.67.052301} is
\begin{equation}\label{instate}
\sigma^{\text{in}}=\sum_{i=1}^{d_{A}^{2}} s_i A_i \otimes B_i,
\end{equation}
where $ \otimes $ is the tensor product and the $s_l$ are non-negative real numbers. The sets $\left\{A_i\right\}$ and $\left\{B_i\right\}$ form orthonormal operator bases for systems $A$ and $B$, respectively with the inner product  $\langle X, Y\rangle\triangleq\text{Tr}(X^\dagger Y)$ \cite{aaqpt}. The Schmidt number of the input state $\operatorname{Sch}(\sigma^{\text{in}})$ is defined as the number of nonzero terms $ s_l $ in the Schmidt decomposition. AAPT requires $\operatorname{Sch}(\sigma^{\text{in}})=d_{A}^2$, i.e., $ s_i>0, \forall 1\leq i \leq d_A^2 $ where $ \sigma^{\text{in}} $ can be separable or entangled \cite{aaqpt}. Since  the set of states with the Schmidt number less than $d_A^2$ is of zero measure, almost all the states  of the combined system $A B$ can be used for AAPT \cite{qptre}.

After the process $ \mathcal{E} $, the output state is
\begin{equation}
\sigma^{\text{out}}=(\mathcal{E} \otimes I)(\sigma^{\text{in}})=\sum_{i=1}^{d_{A}^{2}} s_i \mathcal{E}\left(A_i\right) \otimes B_i.
\end{equation}
Let   the conjugate $ (*) $ and transpose $ (T) $ of $ B $ be $B^\dagger  $.
Since
\begin{equation}\label{ptraceout}
\begin{aligned}
&\operatorname{Tr}_B\left[\left(I_{d_A} \otimes B_j^{\dagger}\right) \sigma^{\text{out}}\right]\\
=&\sum_{i=1}^{d_{A}^{2}} s_i \mathcal{E}\left(A_i\right) \operatorname{Tr}\left(B_j^{\dagger} B_i\right)=s_j \mathcal{E}\left(A_j\right),
\end{aligned}
\end{equation} 
we have
\begin{equation}\label{ea}
\mathcal{E}\left(A_j\right)=\operatorname{Tr}_B\left[\left(I_{d_A} \otimes B_j^{\dagger}\right) \sigma^{\text{out}}\right] /s_j.
\end{equation}
Therefore, we can obtain $ d_{A}^2 $ linearly independent input-output relationships.
Define the  parameterization matrix of $ \{A_j\}_{j=1}^{d_A^2} $ and $\left\{{\mathcal{E}}\left(A_j\right)\right\}_{j=1}^{d_A^2}$ as
\begin{equation}\label{yv}
\begin{aligned}
V &\triangleq\left[\operatorname{vec}\left(A_1\right), \operatorname{vec}\left(A_2\right), \ldots, \operatorname{vec}\left(A_{d_A^2}\right)\right],\\
 Y &\triangleq\left[\operatorname{vec}\left({\mathcal{E}}\left(A_1\right)\right), \operatorname{vec}\left({\mathcal{E}}\left(A_2\right)\right), \ldots, \operatorname{vec}\left({\mathcal{E}}\left(A_{d_A^2}\right)\right)\right],
\end{aligned}
\end{equation}
which are both $ d_{A}^2 \times d_{A}^2 $  matrices and the vectorization function is defined as 
\begin{equation}
\begin{aligned}
\text{vec}(H_{m\times n})\triangleq&[(H)_{11},(H)_{21},\cdots,(H)_{m1},(H)_{12},\\&\cdots,(H)_{m2},
\cdots,(H)_{1n},\cdots,(H)_{mn}]^T.
\end{aligned}
\end{equation}
We also define that $\text{vec}^{-1}(\cdot)$ maps a $d^2\times 1$ vector into a $d\times d$ square matrix.
The relationship between $ V $, $ Y $ and process matrix $ X $  is
\begin{equation}\label{m1}
	\left(I_{d^2} \otimes V^T\right) R\operatorname{vec}(X)=\operatorname{vec}(Y),
\end{equation}
where $ R $ is a determined permutation matrix.
Since $ \{A_j\}_{j=1}^{d_A^2} $ forms an orthonormal operator base,  we have $ V $ is a unitary matrix and thus $\left(V^T\right)^{-1}=V^{*}$.

In experiment, we reconstruct the output state $ \hat\sigma^{\text{out}} $ by QST. Assuming that the number of copies for the output state is $ N $ and the measurement operators are $ \{P_m\}_{m=1}^{M} $ which are applied on the output state. Let the total set number of the measurement operators (i.e., measurement basis set) be $ L $ and the type number of the measurement operators be $ M $. Here we assume that these measurement operators are informationally complete and thus $ M\geq d_A^2d_B^2 $. 
 Using \eqref{ea}, we calculate $\left\{\hat{\mathcal{E}}\left(A_m\right)\right\}_{m=1}^{d_A^2}$  and construct $ \hat Y $ from  $ \hat\sigma^{\text{out}} $.
Since $I_{d_A^2} \otimes V^T  $ is a unitary matrix, the solution of $\operatorname{vec}\left(\hat{X}\right)$ is 
 \begin{equation}
 \operatorname{vec}(\hat X)=R^{T}\left(I_{d_A^2} \otimes V^{*}\right) \operatorname{vec}(\hat{Y}).
 \end{equation}
 Then we consider the the positive semidefinite requirement $X \geq 0$ and the constraint $\operatorname{Tr}_{A}(X)= I_{d_{A}}$ for the TP process and $\operatorname{Tr}_{A}(X)\leq I_{d_{A}}$ for the non-TP process \cite{8022944,PhysRevA.82.042307}. Thus, the QPT can be 
converted into the following optimization problem: Given the parameterization matrix $V$ of $ \{A_m\}_{m=1}^{d_A^2} $, the determined permutation matrix $ R $ and reconstructed $\hat{Y}$, find a Hermitian and positive semidefinite estimate $\hat{X}$ minimizing
	$$\left\|\hat X-\operatorname{vec}^{-1}\left(R^{T}\left(I_{d_{A}^2} \otimes V^{*}\right) \operatorname{vec}(\hat{Y})\right)\right\|,$$ 
	such that $\operatorname{Tr}_{A}(\hat X)= I_{d_{A}}$ for the TP process or  $\operatorname{Tr}_{A}(\hat X)\leq I_{d_{A}}$ for the non-TP process.

 In this paper, $||X||$ denotes the Frobenius norm of $X$.
Here we apply the TSS algorithm in \cite{xiaocdc2} to solve this problem  because it is an analytical algorithm and thus we can further analyze the computational complexity and error bound.   Let $$\hat G\triangleq\operatorname{vec}^{-1}\left(R^{T}\left(I_{d_A^2}\otimes  V^{*}\right) \operatorname{vec}(\hat{Y})\right)$$ be a given matrix. Firstly, we aim to find a Hermitian and positive semidefinite $d_{A}^2\times d_{A}^2$ matrix $\hat D$ minimizing $||\hat D-\hat G||$. We perform the spectral decomposition as $\frac{\hat G+\hat G^{\dagger}}{2}=W\hat{K}W^{\dagger}  $  where $\hat{K}=\operatorname{diag}\left\{k_{1}, \cdots, k_{d_{A}^{2}}\right\}$ is a  diagonal matrix.
Let
\begin{equation}
z_{i}= \begin{cases}
k_{i}, & k_{i} \geq 0, \\ 
0, & k_{i}<0,
\end{cases}
\end{equation}
and the unique optimal solution is $ \hat D=W\operatorname{diag}\{z\}W^{\dagger}  $.

Then we define $ \hat{E}\triangleq\operatorname{Tr}_{A}(\hat{D}) $. For a TP process, we can assume that  $\hat E >0  $ because $ \hat E$  converges to $ I_{d_A} $ as $ N $ tends to infinity. Thus the final estimate is 
\begin{equation}\label{tss21}
\hat{X}=(I_{d_A} \otimes \hat{E}^{-1/2}) \hat{D}(I_{d_A} \otimes \hat{E}^{-1/2})^{\dagger},
\end{equation}
which satisfies $ \hat{X}\geq 0 $ and $\operatorname{Tr}_{A}(\hat X)= I_{d_{A}}$.
For a non-TP process, assuming that the spectral decomposition of $ \hat E $ is
\begin{equation}\label{hatf}
\hat{E}=\hat U \operatorname{diag}\left\{\hat e_{1}, \cdots, \hat e_{d_A}\right\}\hat U^{\dagger},
\end{equation}
where $ \hat e_{1} \geq\cdots \geq\hat e_{c}>0$ and $ \hat e_{c+1}= \cdots =\hat e_{d_A}=0 $.
Then we define
\begin{equation}\label{barf}
\bar{E}\triangleq \hat U \operatorname{diag}\left\{\bar e_{1}, \cdots, \bar e_{d_A}\right\}\hat U^{\dagger},
\end{equation}
where $ \bar e_{i}=\hat e_{i} $ for $ 1\leq i \leq c $, $ \bar e_{i}={\hat e_{c}}/{N} $ for $ c+1\leq i \leq d_A$, and $ N $ is the copy number. 
Since $ \bar{E} $ is invertible, we also define
\begin{equation}
\tilde{E}\triangleq\hat U \operatorname{diag}\left\{\tilde{e}_{1}, \cdots, \tilde{e}_{d_A}\right\}\hat U^{\dagger},
\end{equation}
where $ \tilde{e}_{i}=\min\left({\bar e}_{i},1\right) $ for $ 1\leq i\leq d_A $.
The final estimate is 
\begin{equation}\label{tss22}
\hat{X}=(I_{d_A} \otimes  \tilde  E^{1/2}\bar E^{-1/2}) \hat{D}(I_{d_A} \otimes  \tilde  E^{1/2}\bar E^{-1/2})^{\dagger},
\end{equation}
which satisfies the constraints $\hat X \geq 0$ and  $\operatorname{Tr}_{A}(\hat X) \leq I_{d_{A}}$ for the non-TP process.
In summary, Fig. \ref{aapt2} is the total procedures of the AAPT  with TSS algorithm.
\begin{remark}
	In fact, when  $ d_A>d_B $, we can also complete the task of AAPT by preparing different input states. Assume that there are $ F$ types of input states $ \{\sigma_i^{\text{in}}\}_{i=1}^{F} $ and the Schmidt decomposition is
	\begin{equation}
	\sigma_i^{\text{in}}=\sum_{l=1}^{d_B^2} s_l A_l^i \otimes B_l^i.
	\end{equation}
	To obtain a unique estimate of the process $ X $, we need to ensure the number of linearly independent operators in $ \left\{A_l^i\right\}_{l=1,i=1}^{d_{B}^{2},F} $ is larger or equal to $ d_A^2 $.
	Thus, if $ d_B < d_A $, we need at least $\lceil{d_A^2}/{d_B^2}\rceil $ different  input states with $\operatorname{Sch}(\sigma_{i}^{\text{in}})=d_B^2$, where the least integer greater than or equal to $x$ is defined as $ \lceil x \rceil $.	
	A special case is $ d_B=1 $ which is in fact SQPT and we need at least $ d_A^2 $ different input states.
\end{remark}

\section{Computational complexity and error analysis}
Since the TSS algorithm is analytical, in this section, we analyze the computational complexity and error upper bound. 
\subsection{Computational complexity}
\begin{figure}
	\centering
	\includegraphics[width=3.5in]{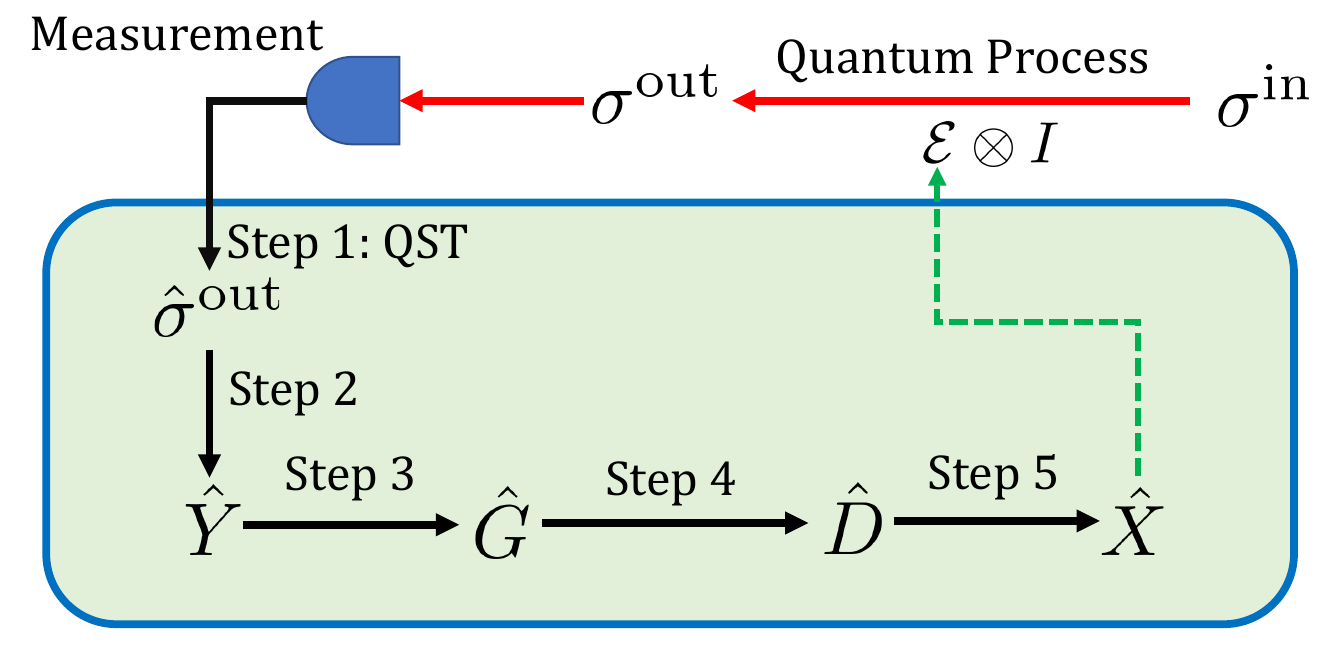}
	\centering{\caption{Procedures of AAPT with the TSS algorithm.}\label{aapt2}}
\end{figure}

In this paper, we neglect the time required for conducting experiments and focus solely on the computational complexity associated with each step outlined in the blue box of Figure \ref{aapt2}.

\textbf{Step 1}. In this step, we apply the quantum state tomography on the output state $ \hat\sigma^{\text{out}} $ using the LRE method in  \cite{Qi2013}, where
the computational complexity  is  $O(Md_A^2d_B^2)$.

\textbf{Step 2}. In this step, we use \eqref{ea} to construct $\hat{\mathcal{E}}\left(A_m\right)$ and $\hat Y$.  The computational complexity is determined by \eqref{ea} which is not worse than $ O(d_A^5d_B^3)  $.

\textbf{Step 3}. In this step, we calculate $\hat G\triangleq\operatorname{vec}^{-1}(R^{T}(I_{d_{A}^2}\otimes  V^{*}) \operatorname{vec}(\hat{Y}))$. The computational complexity for $ (I_{d_{A}^2}\otimes  V^{*}) \operatorname{vec}(\hat{Y})$ 
is $ O(d_A^6) $. Then the computational complexity for $ R^{T}\left(I_{d_A^2}\otimes  V^{*}\right) \operatorname{vec}(\hat{Y}) $ 
is $ O(d_A^4) $ because $ R $ is a permutation matrix.

\textbf{Steps 4--5}. In these steps, we apply the TSS algorithm in \cite{xiaocdc2} and the computational complexity is $O(d_A^6)$ \cite{xiaocdc2}.

Hence, the overall computational complexity of the AAPT with the TSS algorithm can be expressed as $O(Md_A^2d_B^2)$ which is dominated by the QST on the output state $ \hat\sigma^{\text{out}} $.

\subsection{Error analysis}
Here we present the following theorem to describe the analytical error  upper bound  for the AAPT with TSS algorithm.
\begin{theorem}\label{t1}
	Using the TSS algorithm in AAPT, the estimation error  $ \mathbb{E}\|\hat{X}-X\| $  scales as
	$$ O\!\!\left(\frac{d_{A}d_{B}^{1/2}\operatorname{Tr}(E)\sqrt{L\operatorname{Tr}\left(\left({C}^{\dagger} {C}\right)^{-1}\right)}}{\sqrt{N}}  \sqrt{\sum_{j=1}^{d_{A}^2} \frac{1}{s_j^2}}\right), $$
	where $N$ is the number of copies for the output state, $ L $ is the number of POVM sets, $ {C} $ is the parameterization matrix for the measurement operators, $ E=\operatorname{Tr}_{A} (X) $ and $\mathbb E(\cdot)$ represents the expectation taken over all possible measurement outcomes.
\end{theorem}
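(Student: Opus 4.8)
The plan is to track the statistical error introduced by the quantum state tomography step and to propagate it through the remaining, deterministic, steps of the algorithm via Lipschitz-type estimates, then multiply the per-step factors. First I would handle the tomography error: $\sigma^{\text{out}}$ is a density matrix on the $d_Ad_B$-dimensional extended Hilbert space reconstructed by LRE, so applying the error analysis of \cite{Qi2013} to this $d_Ad_B$-dimensional system gives a mean-squared error $\mathbb{E}\|\hat\sigma^{\text{out}}-\sigma^{\text{out}}\|^2 = O(d_Ad_B\,L\operatorname{Tr}((C^\dagger C)^{-1})/N)$, and hence, by Jensen's inequality, $\mathbb{E}\|\hat\sigma^{\text{out}}-\sigma^{\text{out}}\| = O(\sqrt{d_Ad_B\,L\operatorname{Tr}((C^\dagger C)^{-1})}/\sqrt N)$.

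Next I would propagate this through Steps 2 and 3. Writing $\Delta:=\hat\sigma^{\text{out}}-\sigma^{\text{out}}$, equation \eqref{ea} gives $\hat{\mathcal{E}}(A_j)-\mathcal{E}(A_j)=\tfrac1{s_j}\operatorname{Tr}_B[(I_{d_A}\otimes B_j^\dagger)\Delta]$; since $\{B_j\}$ is an orthonormal operator basis (so $\|B_j\|=1$), a short entrywise computation (Cauchy--Schwarz together with completeness of $\{B_j\}$) shows $\|\operatorname{Tr}_B[(I_{d_A}\otimes B_j^\dagger)\Delta]\|\le\|\Delta\|$, whence $\|\hat Y-Y\|^2=\sum_{j=1}^{d_A^2}\|\hat{\mathcal{E}}(A_j)-\mathcal{E}(A_j)\|^2\le\|\Delta\|^2\sum_{j=1}^{d_A^2}1/s_j^2$ --- this is precisely where the requirement $\operatorname{Sch}(\sigma^{\text{in}})=d_A^2$ (all $s_j>0$) and the factor $\sqrt{\sum_j 1/s_j^2}$ enter. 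Because $R$ is a permutation matrix and $I_{d_A^2}\otimes V^{*}$ is unitary ($V$ being unitary), the map $\hat Y\mapsto\hat G$ is norm-preserving, so $\|\hat G-G\|=\|\hat Y-Y\|$; moreover, by \eqref{m1} together with $(V^T)^{-1}=V^{*}$, the noiseless $G$ equals the true $X$, so $\|\hat G-G\|$ already measures the deviation of the pre-projection estimate from $X$.

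Finally I would control Steps 4--5. Since the process matrix $X$ is $d_A^2\times d_A^2$ with constraint $\operatorname{Tr}_A(X)=I_{d_A}$ (or $\le I_{d_A}$), this stage is exactly the two-stage map whose perturbation analysis is carried out in \cite{xiaocdc2}, which I would reuse. The first stage (Hermitization followed by projection onto the positive-semidefinite cone) is non-expansive, so $\|\hat D-X\|\le\|\hat G-G\|$. For the second stage one linearizes $(I_{d_A}\otimes\hat E^{-1/2})\hat D(I_{d_A}\otimes\hat E^{-1/2})$ --- or its $\bar E,\tilde E$-regularized counterpart in the non-TP case --- about the true point, using $\hat E=\operatorname{Tr}_A(\hat D)\to E$, the partial-trace bound $\|\operatorname{Tr}_A(\cdot)\|\le\sqrt{d_A}\,\|\cdot\|$, and $\|X\|\le\operatorname{Tr}(X)=\operatorname{Tr}(E)$ (valid since $X\ge0$), which yields $\mathbb{E}\|\hat X-X\| = O(\sqrt{d_A}\operatorname{Tr}(E))\cdot\mathbb{E}\|\hat G-G\|$. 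Chaining the three estimates produces exactly the stated scaling.

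The main obstacle is the second stage in the non-TP case: $E\mapsto E^{-1/2}$ fails to be Lipschitz where $E$ is rank-deficient, so one must verify that the regularization built into $\bar E$ and $\tilde E$ (with parameter $\hat e_c/N\to0$) keeps $\bar E^{-1/2}$ from blowing up faster than the error decays while still reproducing $E^{1/2}E^{-1/2}$ in the limit, so that the local linearization stays valid --- this is where I expect to lean hardest on the analysis of \cite{xiaocdc2}. A smaller technical point: the Lipschitz constants above only hold on the high-probability event that $\|\Delta\|$ is small, and the complementary event is handled by the standard truncation argument, using that $\operatorname{Tr}(\hat X)$ remains bounded.
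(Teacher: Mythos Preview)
Your approach is the same as the paper's---propagate the QST error of Step~1 through Lipschitz-type bounds for Steps~2--5 and chain them---but the $\sqrt{d_Ad_B}$ factor enters at a different place. You insert it at Step~1, writing $\mathbb{E}\|\hat\sigma^{\text{out}}-\sigma^{\text{out}}\|^2 = O(d_Ad_B\,L\operatorname{Tr}((C^\dagger C)^{-1})/N)$; however, the LRE bound actually quoted from \cite{Qi2013} is dimension-free, $\mathbb{E}\|\hat\sigma^{\text{out}}-\sigma^{\text{out}}\|^2 \le \tfrac{L}{4N}\operatorname{Tr}((C^\dagger C)^{-1})$. The paper instead acquires the factor at Step~2, invoking the partial-trace inequality $\|\operatorname{Tr}_B(\cdot)\|\le\sqrt{d_B}\,\|\cdot\|$ (a lemma cited from \cite{dbound}) together with $\|I_{d_A}\otimes B_j^\dagger\|=\sqrt{d_A}$, which gives $\|\hat{\mathcal E}(A_j)-\mathcal E(A_j)\|\le\tfrac{\sqrt{d_Ad_B}}{s_j}\|\Delta\|$. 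Your Step-2 bound $\|\operatorname{Tr}_B[(I_{d_A}\otimes B_j^\dagger)\Delta]\|\le\|\Delta\|$, obtained by expanding $\Delta$ in the orthonormal operator basis $\{A_i'\otimes B_k\}$, is correct and sharper by exactly that factor. The two discrepancies cancel, so you land on the stated scaling; but note that your Step~1 as written misquotes the LRE bound, and that combining the \emph{correct} Step-1 bound with your sharper Step-2 estimate would in fact prove a bound tighter than the theorem by $\sqrt{d_Ad_B}$ (which of course still implies it). Your handling of Steps~3--5 (unitary invariance, non-expansiveness of the Hermitian/PSD projection, and the $O(\sqrt{d_A}\operatorname{Tr}(E))$ constant from \cite{xiaocdc2}) matches the paper's.
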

\begin{proof}
	\subsubsection{Error in Step 1} Step-1 is the QST on the output state $ \hat\sigma^{\text{out}} $ using LRE.
	From \cite{Qi2013}, we have
	\begin{equation}
	\mathbb{E}\left\|\hat{\sigma}^{\text {out }}-\sigma^{\text {out }}\right\|^2 \leq \frac{L}{4 N} \operatorname{Tr}\left(\left({C}^{\dagger} {C}\right)^{-1}\right).
	\end{equation}
\subsubsection{Error in Step 2}	
	Then in Step 2, we reconstruct $\hat{\mathcal{E}}\left(A_j\right)$ and $ \hat Y $ from $ \hat{\sigma}^{\text {out }} $.
	We introduce the following Lemma:
	\begin{lemma}\cite{dbound}\label{lemma2}
		Let $\mathbb{H}_A$ and $\mathbb{H}_B$ be finite-dimensional Hilbert spaces of dimensions $d_{A}$ and $d_{B}$, respectively, and let $X \in \mathbb{H}_A \otimes \mathbb{H}_B$. Then for any unitarily invariant norm that is multiplicative over tensor products, the partial trace satisfies the norm inequality
		\begin{equation}
		\left\|\operatorname{Tr}_{A}(X) \right\| \leq \frac{d_{A}}{\left\|I_{A}\right\|}\|X\|,
		\end{equation}
		where $I_{A}$ is the identity operator.
	\end{lemma}
	
	 Using Lemma \ref{lemma2} and  $ \left\|I_{d_A} \otimes B_j^{\dagger}\right\|=\sqrt{d}\left\|B_j\right\| $, we have
	\begin{equation}
	\begin{aligned}
	&\left\|\hat{\mathcal{E}}\left(A_j\right)-{\mathcal{E}}\left(A_j\right)\right\| \\
	=&\|\operatorname{Tr}_B\left[\left(I_{d_A} \otimes B_j^{\dagger}\right) \hat{\sigma}^{\text {out }}\right] / s_j\\
	&-\operatorname{Tr}_B\left[\left(I_{d_A} \otimes B_j^{\dagger}\right) \sigma^{\text {out }}\right] / s_j\| \\
	\leq& \frac{\sqrt{d_B}}{s_j}\left\|I_{d_A} \otimes B_j^{\dagger}\right\|\left\|\hat{\sigma}^{\text {out }}-\sigma^{\text {out }}\right\| \\
	=&\frac{\sqrt{d_Ad_B}}{s_j}\left\|B_j\right\|\left\|\hat{\sigma}^{\text {out }}-\sigma^{\text {out }}\right\|.
	\end{aligned}
	\end{equation}
	Since $ \left\|B_j\right\|^2=1 $, the corresponding error is bounded by
	\begin{equation}\label{aqy}
	\begin{aligned}
	&\mathbb{E}\|\hat{Y}-Y\|^2=\mathbb{E}\|\operatorname{vec}(\hat{Y})-\operatorname{vec}(Y)\|^2\\
	=&\mathbb{E}\sum_{j=1}^{d_{A}^2}\left\|\hat{\mathcal{E}}\left(A_j\right)-{\mathcal{E}}\left(A_j\right)\right\|^2 \\
	\leq& \frac{d_{A} d_BJ}{4 N} \operatorname{Tr}\left(\left({C}^{\dagger} {C}\right)^{-1}\right) \sum_{j=1}^{d_{A}^2} \frac{1}{s_j^2}.
	\end{aligned}
	\end{equation}
\subsubsection{Error in Step 3}	In Step 3, since $ V^{*} $ is a unitary matrix, 	we have
\begin{equation}\label{yy}
\begin{aligned}
&\quad\|\hat{G}-G\|^{2}\\
&=\left\|\operatorname{vec}\left(\hat G\right)-\operatorname{vec}\left( G\right)\right\|^2 \\
&=\left\|R^{T}\left(I_{d_A^2} \otimes V^{*}\right)(\operatorname{vec}(\hat{Y})-\operatorname{vec}(Y))\right\|^{2} \\
&=\|\operatorname{vec}(\hat{Y})-\operatorname{vec}(Y)\|^{2}. 
\end{aligned}
\end{equation}
\subsubsection{Error in Steps 4--5}		
	In Steps 4--5, we apply the TSS algorithm in \cite{xiaocdc2}. For TP processes, Ref. \cite{xiaocdc2} has proved that 
	\begin{equation}\label{ggerror}
	\begin{aligned}
	\|\hat{D}-D\| \leq\|\hat{D}-\hat{G}\|+\|\hat{G}-G\| \leq 2\|\hat{G}-G\|,
	\end{aligned}
	\end{equation}
	and
	\begin{equation}\label{xg1}
	\begin{aligned}
	\|\hat{X}-X\| &\leq\|\hat{X}-\hat{D}\|+\|\hat{D}-D\|\\
	& \leq (d_A^{3/2}+1)\|\hat{D}-D\|. \\
	\end{aligned}
	\end{equation}
	For non-TP processes, using the similar error analysis in \cite{xiaocdc2}, we have
		\begin{equation}\label{xg}
	\begin{aligned}
	\|\hat{X}-X\| &\leq\|\hat{X}-\hat{D}\|+\|\hat{D}-D\|\\
	& \leq (\sqrt{d_A}\operatorname{Tr}({E})+1)\|\hat{D}-D\|, \\
	\end{aligned}
	\end{equation}
	where $ E=\operatorname{Tr}_{A}\left(X\right) $.
	Using \eqref{aqy}, \eqref{yy}, \eqref{ggerror}, and \eqref{xg}, the final error upper bound is
	\begin{equation}\label{finalerror}
	\begin{aligned}
	&\mathbb{E}\|\hat{X}-X\| \leq 2\left(\sqrt{d_A}\operatorname{Tr}(F)+1\right)\mathbb{E}\|\hat{D}-D\| \\
	\sim&O\left(\frac{d_{A}d_{B}^{1/2}\operatorname{Tr}(E)\sqrt{L\operatorname{Tr}\left(\left({C}^{\dagger} {C}\right)^{-1}\right)}}{\sqrt{N}}  \sqrt{\sum_{j=1}^{d_{A}^2} \frac{1}{s_j^2}}\right).
	\end{aligned}
	\end{equation}
\end{proof}

\subsection{Optimal input state}
In this section, based on the error bound \eqref{finalerror} and the numerical stability, we consider the optimization of the input state $ \sigma^{\text{in}} $.
 Based on the numerical stability from \eqref{ea}, we aim to maximize the minimum of $ \{s_j\}_{j=1}^{d_{A}^{2}} $.
Refs. \cite{meaqo} and \cite{qptre} also considered this problem and gave a definition of faithfulness. They have shown that the maximally entangled state is the optimal faithful (in the sense of minimal experimental errors) input state.
Besides \eqref{ea}, we also aim to minimize $ \sum_{j=1}^{d_{A}^2} \frac{1}{s_j^2} $ based on the error bound \eqref{finalerror}. Since
\begin{equation}
\operatorname{Tr}\left(\left(\sigma^{\text{in}}\right)^2\right)=\sum_{j=1}^{d_{A}^2} s_j^2 \leq 1,
\end{equation}
using Cauchy–Schwarz inequality, we have
\begin{equation}\label{opst1}
\sum_{j=1}^{d_{A}^2} \frac{1}{s_j^2} \sum_{j=1}^{d_{A}^2} s_j^2 \geq d_{A}^4, \sum_{j=1}^{d_{A}^2} \frac{1}{s_j^2} \geq d_{A}^4,
\end{equation}
and
\begin{equation}\label{opst2}
\min \left(\{s_j\}_{j=1}^{d_{A}^{2}}\right)\leq \frac{1}{d_A}.
\end{equation}
The inequalities of \eqref{opst1} and \eqref{opst2} becomes equalities \emph{simultaneously} if and only if $ \sigma^{\text{in}} $ is a pure state and  $ s_j=\frac{1}{d_A} $, i.e., the input state $ \sigma^{\text{in}} $ is the maximally entangled state. Therefore, the optimal input state for AAPT is the maximally entangled state.

\section{Numerical examples}
In practice, different measurement bases can be employed to perform measurement. In this section, we use Cube measurements \cite{PhysRevA.78.052122} for AAPT because it is relatively easy to be realized in experiment. For one-qubit systems, the Cube measurements are $\left\{\frac{I \pm \sigma_{x}}{2}, \frac{I \pm \sigma_{y}}{2}, \frac{I \pm \sigma_{z}}{2}\right\}$ where $ \sigma_{x}=\left(\begin{array}{ll}
0 & 1 \\
1 & 0
\end{array}\right) $, $ \sigma_{y}=\left(\begin{array}{cc}
0 & -\mathrm i \\
\mathrm i & 0
\end{array}\right) $, $\sigma_{z}=\left(\begin{array}{cc}
1 & 0 \\
0 & -1
\end{array}\right)$ are Pauli matrices. For  multi-qubit systems, the Cube measurements are the tensor products of one-qubit Cube measurements.

 We consider a phase  damping process (\cite{qci}, Page 384) for one-qubit system which is TP and can be characterized by two Kraus operators
\begin{equation}\label{phase}
\mathcal{A}_1=\left(\begin{array}{cc}
1 & 0 \\
0 & \sqrt{1-\lambda}
\end{array}\right),\mathcal{A}_2=\left(\begin{array}{cc}
0 & 0 \\
0 & \sqrt{\lambda}
\end{array}\right),
\end{equation}
where $\lambda=2/3 $ describes the probability that a photon from
the system has been scattered without loss of energy \cite{qci}.

For AAPT, we input the maximally entangled state which is optimal input state in this paper, and apply two-qubit Cube measurements on the output state. Then using the algorithm in \cite{qetlab,MISZCZAK2012118}, we generate one random two-qubit input state with the full Schmidt number  for AAPT.
Thus, the total resource number in AAPT is $ N $ which is the copy number for the unique input state in the extended Hilbert space. For each resource number, we repeat our algorithm $ 100 $ times, and obtain the average MSE (mean squared error) and error bars.
The MSE with the optimal and random input states versus  the total resource number $N$ in logarithm are shown in Fig. \ref{mse} where the scalings of the MSEs are both $ O(1/N) $ satisfying Theorem \ref{t1}. In addition, the MSE with the optimal input state are smaller than the MSE with random input states, which demonstrates the performance of the optimal design for the input state.

\begin{figure}
	\centering
	\includegraphics[width=3.5in]{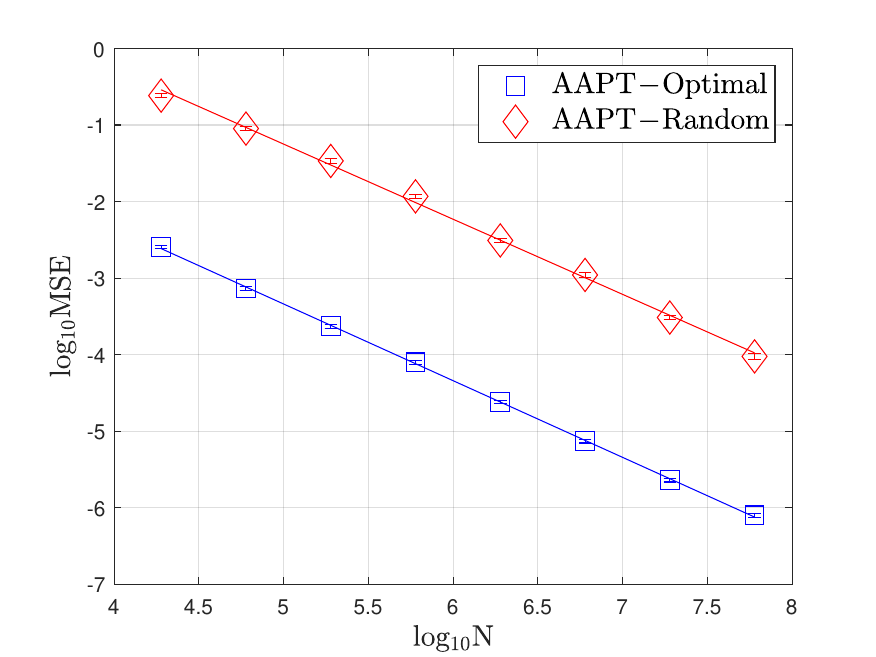}
	\centering{\caption{MSE versus the total resource number $N_t$ with the optimal input state (the maximally entangled state) and a random input state.}\label{mse}}
\end{figure}

\section{Conclusion}
In this paper, we have applied the  two-stage solution (TSS) for AAPT. We have analyzed the computational complexity and established 
an error upper bound. Furthermore, we have discussed the optimal design on the input state in AAPT. 
Numerical example on a phase  damping process has demonstrated the
effectiveness of the optimal design and illustrated the theoretical error analysis. Further work will focus on the optimal design about the measurement operators.



\bibliographystyle{ieeetr}         
\bibliography{cdcaqpt} 

\end{document}